\renewcommand{\textbf}[1]{\begingroup\bfseries\mathversion{bold}#1\endgroup}
\newlength{\bibitemsep}\setlength{\bibitemsep}{.1\baselineskip plus
\newlength{\bibparskip}\setlength{\bibparskip}{0pt}
\let\oldthebibliography\thebibliography \renewcommand\thebibliography[1]{
  \oldthebibliography{#1} \setlength{\parskip}{\bibitemsep}
  \setlength{\itemsep}{\bibparskip} }
\newtheorem{thm}{Theorem}[section]
\newtheorem{defi}{Definition}[section]
\newtheorem{corollary}[thm]{Corollary}
\newtheorem{prop}[thm]{Proposition}
\theoremstyle{definition}
\newtheorem{remark}[thm]{Remark}
\newcommand{\R}{\mathbb R}
\newcommand{\Z}{\mathbb Z}
\newcommand{\N}{\mathbb N}
\numberwithin{equation}{section}
\def\XXint#1#2#3{{\setbox0=\hbox{$#1{#2#3}{\int}$}
    \vcenter{\hbox{$#2#3$}}\kern-.5\wd0}}
\date{date}
\begin{document}
\title{On energy ground states among crystal lattice structures with prescribed bonds}
\author{Laurent B\'{e}termin\\ \\
Faculty of Mathematics, University of Vienna,\\ Oskar-Morgenstern-Platz 1, 1090 Vienna, Austria\\ \texttt{laurent.betermin@univie.ac.at}. ORCID id: 0000-0003-4070-3344 }
\date\today
\maketitle

\begin{abstract}
We consider pairwise interaction energies and we investigate their minimizers among lattices with prescribed minimal vectors (length and coordination number), i.e. the one corresponding to the crystal's bonds. In particular, we show the universal minimality -- i.e. the optimality for all completely monotone interaction potentials -- of strongly eutactic lattices among these structures. This gives new optimality results for the square, triangular, simple cubic (SC), face-centred-cubic (FCC) and body-centred-cubic (BCC) lattices in dimensions 2 and 3 when points are interacting through completely monotone potentials. We also show the universal maximality of the triangular and FCC lattices among all lattices with prescribed bonds. Furthermore, we apply our results to Lennard-Jones type potentials, showing the minimality of any universal minimizer (resp. maximizer) for small (resp. large) bond lengths, where the ranges of optimality are easily computable. Finally, a numerical investigation is presented where a phase transition of type ``square-rhombic-triangular" (resp. ``SC-rhombic-BCC-rhombic-FCC") in dimension $d=2$ (resp. $d=3$)  among lattices with more than 4 (resp. 6) bonds is observed.
\end{abstract}

\noindent
\textbf{AMS Classification:}  Primary 74G65 ; Secondary 82B20. \\
\textbf{Keywords:}  Lattice energies, Universal optimality, Cubic lattices, Crystals, Lattice theta function, Epstein zeta function, Lennard-Jones potentials.

\section{Introduction}

The origin of periodic order in crystalline solid still remains a mystery, especially when we look from the mathematical side of the problem. The so-called ``Crystallization Conjecture" (see \cite{BlancLewin-2015}) states that most of the ground states of Physical interacting systems, especially at low temperature, are periodic lattices. Solving this type of problem for a given interaction energy and for systems with a large number of interacting points is mathematically and computationally extremely challenging in dimension $d\geq 2$ due to the very high number of variables and critical points. Even in the simple Born-Oppenheimer adiabatic approximation case where the interactions between atoms are pairwise, only few results showing the minimality of the triangular lattice \cite{Rad2,Crystal,Luca:2016aa} and the square lattice \cite{BDLPSquare} for the total interaction energy are available. Furthermore, adding angular-dependent interactions allows to obtain the square lattice \cite{Stef1}, the honeycomb lattice \cite{ELi,Stef2} and the Face-Centred-Cubic (FCC) lattice \cite{TheilFlatley} as ground states (see Table \ref{fig:lattices} for a precise definition of these lattices).

\medskip

When restricted to simple periodic configurations, the number of degrees of freedom of the system can be reduced, leading to more general optimality results that give important clues concerning the possible ground states for the corresponding general crystallization problem. Rigorous optimization of crystal's shapes under constraints has been performed for inverse power laws \cite{Rankin,Eno2,Cassels,Diananda,Ennola,FieldsEpstein,Coulangeon:kx}, Gaussians \cite{Mont,Coulangeon:2010uq,BeterminPetrache,LBSoftTheta20,BFMaxTheta20,CKMRV2Theta}, Lennard-Jones potentials \cite{BetTheta15,Beterloc,Beterminlocal3d,SamajTravenecLJ} and more general functions \cite{OptinonCM,LBMorse} or energies \cite{CheOshita,Sandier_Serfaty,LuoChenWei,LuoWeiBEC,LBEAM21}. Most of them investigate the minimization problem at fixed density and look for optimality results for the usual best (densest) lattice packing that are the triangular lattice, the FCC lattice, the Gosset lattice $\mathsf{E}_8$ and the Leech lattice $\Lambda_{24}$. In dimension $d=2$, the triangular lattice has been shown to be universally optimal, i.e. minimal for any completely monotone interaction potential (see Definition \ref{def:univopt}), among lattices of fixed density by Montgomery in \cite{Mont} and is conjectured to be universally optimal among all periodic configurations by Cohn and Kumar \cite{CohnKumar}. In dimension $d=3$, there is no universal minimizer at fixed density but the FCC and Body-Centred-Cubic (BCC) lattices are believed to be the only possible minimizers for Gaussian interactions according to Sarnak-Strombergsson Conjecture (see \cite[Eq. (43)]{SarStromb}). Moreover, $\mathsf{E}_8$ and $\Lambda_{24}$ have been recently shown in \cite{CKMRV2Theta} to be universally optimal among all periodic configurations of fixed density in dimensions $d\in \{8,24\}$.

\medskip

In this paper, our goal is to derive universal optimality results among lattices with prescribed minimal vectors, i.e. when their number and side length are fixed. The segments between lattice's nearest-neighbors (i.e. achieving the lattice's smallest distance) will be called ``bonds" and the number of nearest-neighbors of any atom will be called ``coordination number" as in the Solid State Physics setting (see Definition \ref{def:bonds}). Physically, the ion-electron interaction is responsible for fixing the length of bonds (where the interaction energy is minimal) and a specific coordination number is usually the consequence of more particles interactions (not only pair). Therefore, the model we are investigating in this work is not only derived from the Born-Oppenheimer approximation but can be viewed as a mixed one: taking into account the ion-electron interaction as well as other forces inducing a specific coordination number, we only consider a first order approximation of the total energy (i.e. a pairwise one) that we want to optimize among lattices.

\medskip

Furthermore, it has to be noticed that the common point of the crystallization's proofs cited above is the fact that interaction potentials induce one or several  ``preferred" equilibrium distances (and/or angles) between particles in the ground state configurations. This yields to a locally stable configuration which, combined with packing arguments and long-range interactions estimates, appears to be globally minimal. In other words, nearest-neighbors interactions play a fundamental role in crystal's formation as well as in the determination of its optimal shape, leading to rigidity properties (see also \cite{LBAngleRigidity} for a related problem with fixed bonds and angles). Physically, it is a natural phenomenon (see e.g. \cite[Sect. 1.2]{Kaxiras}) due to a combination of repulsion between particles at short distance (Pauli Principle) and attraction at long distance which can be due to the nature of the particles (e.g. van der Waals forces), the environment (e.g. a background of opposite charges in a Jellium, see \cite{ParrYang,GiulianiVignale}) or some external confinement. Let us quote Feynman about the importance of local rules in the crystal's formation at low temperature (see \cite[Chap. 30]{FeynmanII}): \textit{``when the atoms of matter are not moving around very much, they get stuck together and arrange themselves in a configuration with as low an energy as possible. If the atoms in a certain place have found a pattern which seems to be of low energy, then the atoms somewhere else will probably make the same arrangement. For these reasons, we have in a solid material a repetitive pattern of atoms."} (see also \cite{Dolbilin} for details about local rules). The resulting bonds created between particles of the same species appear to have the same length. In this work, we will therefore assume at the same time simple periodicity and the fact that each particle is exactly bonded to $r$ other particles with bond length $\lambda$, and we actually seek to minimize potential energies of type
\begin{equation}\label{eq:Efintro}
E_f[L]:=\sum_{p\in L\backslash \{0\}} f(|p|^2)
\end{equation}
among such periodic point configurations $L$ where particles are interacting through a radially symmetric potential $f$ decaying sufficiently fast to zero at infinity. The underlying physical question is therefore the following: 
\begin{itemize}
\item \textbf{Physical Problem.} Assuming that strong forces (like in metals) obliges the bonds to have a certain fixed length as well as an exact (or minimum) coordination number for each atom, what is the crystal lattice structure with the lowest potential energy?
\end{itemize}
 Notice that, for lattices -- also called perfect lattices (see e.g. \cite{Gruber}) -- that are the unique one to have a certain coordination number $r$, like for instance the triangular, FCC, $\mathsf{E}_8$ or Leech lattices in dimensions $d\in \{2,3,8,24\}$, this minimization problem is obviously not relevant. However, comparing their energies with the one of other lattices having the same minimal bond length $\lambda$, but a different coordination number $r$, is of interest. 

\medskip

More precisely, our work is aimed to generalize the one of Fields \cite{FieldsEpstein} who focused on universal local minimizers for inverse power-law interactions, i.e. where $f(r)=r^{-s/2}$ and $E_f$ given by \eqref{eq:Efintro} is the Epstein zeta function (see \eqref{eq:zetatheta}). Fields was in particular interested in ``fragile" lattices (see \cite{FieldsFragiles}) reaching a local minimum of the sphere packing density among lattices with a fixed coordination number. The most symmetric and important fragile lattices are, in dimensions $d\in \{2,3\}$, the square, Simple Cubic (SC) and BCC lattices, for which only few minimality results are actually available. Furthermore, together with the triangular and FCC lattices (those reache the maximum of the sphere packing density), they are known as the only strongly eutactic lattices (see Definition \ref{def:eutactfragile} and \cite[Def. 1]{Gruber}) in dimensions $2$ and $3$. For instance, excepted the very recent crystallization result obtained in \cite{BDLPSquare} as well as the preliminary works done in \cite[Sect. 6.2]{OptinonCM} for designing a particular potential having a square lattice as a ground state, only local minimality properties for the square lattice have been derived (see e.g. \cite{Beterloc}). The situation for the BCC lattice is similar, see e.g. \cite{Ennola,Beterminlocal3d}. Concerning the SC lattice, only numerical investigations are available (see e.g. \cite{MarcotteZ3}), showing that a pairwise potential can be designed to have a SC lattice as a ground state. Moreover, in the multicomponent case, the rock-salt structure consisting of a SC lattice structure with an alternation of charges $\pm 1$ at its sites is expected to be a ground state among lattice structures and charges for a large class of potentials (see \cite{LBMFHK20}), as already shown in dimension $d=2$ by Friedrich and Kreutz \cite{FrieKreutSquare} in the case of short-range potentials with only two types of charges $\pm 1$.

\medskip

Two kinds of interaction potentials $f$ are used in this paper: the completely monotone functions that are Laplace transform of nonnegative measure (see Definition \ref{def:potenergy}) and the Lennard-Jones type potentials defined by
\begin{equation}\label{eq:LJintro}
f(r):=\frac{a}{r^p}-\frac{b}{r^q},\quad (a,b)\in (0,\infty),\quad p>q>\frac{d}{2}.
\end{equation}
This type of potential was actually initially proposed by Mie \cite{Mie}, then popularized by Lennard-Jones \cite{LJ} and now widely used for molecular simulations (see e.g. \cite{MolecSimul}). They are one-well potentials and the most classical parameter's choice is the one taking into account the van der Waals attraction at large distance, i.e. $f(r^2)=r^{-12}-2r^{-6}$ with minimum at $r_0=1$. Whereas the crystallization problem associated to  \eqref{eq:LJintro} is still open -- the asymptotic minimizer as the number of particles goes to infinity is conjectured to be a triangular lattice, see e.g. \cite[Fig. 1]{BlancLewin-2015} --, minimizing $E_f$ among lattices has been recently intensively studied. General minimality results have been derived, showing the optimality of universal minimizers at fixed high density and global optimality of a triangular lattice for small exponents $(p,q)$, see e.g. \cite{BetTheta15,BDefects20}. Furthermore, local optimality results at fixed density are showed in \cite{Beterloc,Beterminlocal3d} in dimensions 2 and 3 for square, triangular, SC, FCC and BCC lattices, as well as numerical investigation of the minimizers at fixed density (see also \cite{SamajTravenecLJ} and our Tables \ref{fig:locLJ2d} and \ref{fig:locLJ3d}). In particular, the two-dimensional minimization problem at fixed density exhibits a phase transition of type ``triangular-rhombic-square-rectangular" for its minimizer as the density decreases. We aim to investigate the same kind of minimization problem when the space of lattices with fixed density is replaced by the one with fixed bonds.
 
\medskip

The minimization problem we are investigating in this work appears indeed to be different than the one among periodic structures with fixed density. In dimension $d\in \{2,3\}$, whereas the latter mostly favors the perfect triangular and FCC lattices as minimizers, the one we are studying here is dedicated to new optimality results for the fragile square, SC and BCC lattices (see again \cite{FieldsEpstein} for the inverse power-law case). While exploring the space of lattices with prescribed minimal vectors, we easily see that the lattice's density of points changes continuously. Therefore, studying this problem is complementary to the fixed density case in order to understand the variations of $E_f$ in the space of all $d$-dimensional lattices. Also, from a technical point of view, the minimization of lattice energies is done here with linear constraints on the side length whereas the constraint is polynomial in the fixed density case (the discriminant of the quadratic forms is fixed). This is a very important detail that allows us to use convexity arguments on a convex set, getting the uniqueness of our (a priori local) minimizers.

\medskip

Hence, using techniques developed in \cite{DeloneRysh,FieldsEpstein,Gruber,BetTheta15}, we show the following results:
\begin{itemize}
\item \textbf{Universal minimizers.} Any strongly eutactic lattice (see Definition \ref{def:eutactfragile}) is a unique universal minimizer among lattices with corresponding prescribed bonds. In dimension $d\in \{2,3\}$, the square, triangular, SC, FCC and BCC lattices are the only minimizers of this type (see Theorem \ref{thm:univlocal});
\item \textbf{Universal maximizers in dimensions 2 and 3.} The triangular and FCC lattices are universal maximizers among lattices with prescribed bonds (see Corollary \ref{cor:2d3d});
\item \textbf{Optimality at small and large scales for Lennard-Jones type potentials.} We now assume that $f$ is a Lennard-Jones type potential defined by \eqref{eq:LJintro}. Any universal minimizer is again minimal for $E_f$ among lattices with prescribed bonds of side length $\lambda\in (0,\lambda_0)$, where $\lambda_0$ can be explicitly computed. Furthermore, any unique universal maximizer is minimal for $E_f$ among lattices with prescribed bonds of side length $\lambda>\lambda_1$, where $\lambda_1$ can be again explicitly computed (see Proposition \ref{prop:LJ}).
\end{itemize}

Additionally, we have numerically studied the two-dimensional classical $(12,6)$ Lennard-Jones energy with potential $f(r^2)=r^{-12}-2r^{-6}$ among all lattices with bonds side length $\lambda>0$ (see Section \ref{subsec:num}). Its lattice minimizer exhibits a phase transition of type ``square-rhombic-triangular" as $\lambda$ increases (``rhombic" designing a non-triangular non-square lattice with $r\geq 4$ prescribed bonds). In Table \ref{fig:locLJ2d}, we give the numerical values we found and we compare with the fixed area minimization problem (the area of a lattice being the inverse of its density, i.e. the volume of its unit cell) for the very same Lennard-Jones potential already studied in \cite{Beterloc}. We observe that, for small $\lambda$, the square lattice replaces the triangular one as a minimizer. Then they are both triangular in the range of $\lambda$ where the global minimizer of $E_f$ among all possible lattices is achieved and actually expected to be the triangular lattice $\lambda_{opt} \mathsf{A}_2$ where $\lambda_{opt}\approx 1.112$. Then the minimizer stays triangular whereas, in the density fixed case, it started to change from rhombic to degenerate rectangular lattice.

\medskip

\begin{table}[!h]
\centering
\begin{tabular}{|c|c|c|}
\hline
 & \textbf{Min. at fixed area $A$} & \textbf{Min. at fixed minimal bond $\lambda$} \\
\hline
\textbf{Square phase} & $A_1\approx 1.143<A<A_2\approx 1.268$ & $0<\lambda  < \lambda_0\approx 0.76286$\\
\hline
\textbf{Triangular phase} & $0<A<A_{BZ}\approx 1.138$ & $\lambda> \lambda_1\approx 0.989$ \\
\hline
\textbf{Rhombic phase} & $A_{BZ}<A<A_1$ & $\lambda_0<\lambda<\lambda_1$ \\
\hline
\textbf{Rectangular phase} & $A> A_2$ & None\\
\hline
\end{tabular}
\caption{Comparison of minimization problems among lattices with fixed area $A$ (see \cite{Beterloc}) and prescribed bonds with side length $\lambda$ we have found for $E_f$ (see Section \ref{subsec:num}) where $f(r)=r^{-6}-2 r^{-3}$, i.e. $f(r^2)$ is the classical Lennard-Jones potential. Remark that a square lattice of area $A_1$ or $A_2$ has a side length equal to $\sqrt{A_1}\approx 1.069$ or $\sqrt{A_2}\approx 1.126$, and a triangular lattice of area $A_{BZ}$ has a side length equal to $\ell_{BZ}\approx 1.146$.}
\label{fig:locLJ2d}
\end{table}

Finally, in the three-dimensional case also studied in Section \ref{subsec:num}, the Lennard-Jones energy minimizer exhibits a phase transition as $\lambda$ increases (see Table \ref{fig:locLJ3d}) of type ``SC-rhombic-BCC-rhombic-FCC" (``rhombic" designing here a non-SC non-BCC non-FCC lattice with $r\geq 6$ prescribed bonds). There is currently no complete study of the global minimizer of $E_f$ at fixed density in dimension $d=3$ available, but only local minimality results are showed in \cite{Beterminlocal3d}. If we compare our findings with the latter, the FCC and BCC phases identified in \cite{Beterminlocal3d} are again replaced by a SC one at high density (i.e. small $\lambda$). The global minimizer of our energy among all side lengths $\lambda>0$ and all lattices with bonds of length $\lambda$ is a FCC lattice $\lambda_{opt} \mathsf{D}_3$ where $\lambda_{opt}\approx 0.97$.

\medskip

\begin{table}[!h]
\centering
\begin{tabular}{|c|c|c|}
\hline
 & \textbf{Loc. min. at fixed volume $V$} & \textbf{Min. at fixed minimal bond $\lambda$} \\
\hline
\textbf{SC phase} & $V_1\approx 1.2<V<V_2\approx 1.344$ & $0<\lambda  < \lambda_0\approx 0.735$\\
\hline
\textbf{BCC phase} & $0<V<V_{\min}\approx 1.091$ & $\lambda_1\approx 0.9<\lambda<\lambda_2\approx 0.94$ \\
\hline
\textbf{FCC phase} & $0<V<V_{\min}\approx 1.091$ & $\lambda >\lambda_3\approx 0.95$ \\
\hline
\textbf{Rhombic phase} & Not studied & $\lambda \in (\lambda_0,\lambda_1)\cup (\lambda_2,\lambda_3)$\\
\hline
\end{tabular}
\caption{Comparison of minimization problems among lattices with fixed volume $V$ (only local minimality results, see \cite{Beterminlocal3d}) and prescribed bonds with side length $\lambda$ we have found (see Section \ref{subsec:num}) for $E_f$ where $f(r)=r^{-6}-2 r^{-3}$, i.e. $f(r^2)$ is the classical Lennard-Jones potential. Notice that SC lattices with volume $V_1$ or $V_2$ have side lengths equal to $V_1^{1/3}\approx 1.06266$ or $V_2^{1/3}\approx 1.10357$. Furthermore, FCC or BCC lattices of volume $V_{\min}$ have side lengths $\lambda_{\mathsf{D}_3}\approx 1.15553$ or $\lambda_{\mathsf{D}_3^*}\approx 1.12326$.}
\label{fig:locLJ3d}
\end{table}

Even though we chose to focus on summable interaction potentials, it is expected that our results can be generalized to any completely monotone potential $f$ as well as any Lennard-Jones type potentials with arbitrary small exponents. We have for example in mind the Coulomb potentials $f(r^2)\in \{-\log r, r^{-d+2}, d\geq 3\}$. The main issue is the way to define a new energy $E_f$ (compared to \eqref{eq:Efintro}) when $f$ is non-summable. A classical possibility is to use Ewald Summation Method, writing $E_f$ in terms of lattice theta functions (defined by \eqref{eq:zetatheta}) as it was done for instance in \cite{SaffLongRange}. Since the SC, FCC and BCC lattices are universal minimizers, i.e. minimizers for all lattice theta functions (see Definition \ref{def:univopt}), there is no doubt that the same optimality results will hold for such potentials (see also \cite{PetracheSerfatyCryst} for another connection between universal minimality and non-summable potentials).

\medskip

To conclude, our results are complementary to the one found in \cite{BetTheta15,Beterloc,Beterminlocal3d,SamajTravenecLJ} for the fixed density case and give a new point of view on the lattice energy minimization problem. It also provides new clues on the reason of the SC, FCC and BCC ground states observed in nature for classical crystals (compared to quantum crystals \cite{BuchananQC} where quantum effects cannot be ignored) since these lattices appear to be globally minimal for pairwise interaction potentials among structures with prescribed bonds. For instance, an adequate choice of parameters $(p,q,a,b)$ for a Lennard-Jones type potential can give a model where a real crystal structure is a minimizer of $E_f$ among lattices with a given (the physically true one) minimal bond length $\lambda$. In particular, a SC structure (like the one of $\alpha$-Polonium) can be stabilized (and not only locally) for not too small bonds and for a classical potential which is a novelty compared to previous works (see e.g. \cite{MarcotteZ3,Beterminlocal3d}). However, it is not clear whether a single Lennard-Jones type potential can explain different ground state structures simultaneously. Indeed, a quick look to the periodic table of elements, and especially to the covalent radii of the atoms, does not show that BCC bonds are always smaller than FCC bonds (even in one column) as one might expect from our numerical findings in Table \ref{fig:locLJ3d}. To quote Kaplan \cite[p. 3]{Kaplan}: \textit{``It should be mentioned that semiempirical potentials cannot describe intermolecular potential adequately for a wide range of separations. A given potential with parameters calibrated for one property, often describes other properties inadequately, since different physical properties may be sensitive to different parts of the potential curve."} The inadequacy of central potentials for explaining all aspects of a crystal, including its ground state, has been also discussed in \cite{LBMorse} for Morse type potentials.

\medskip

\textbf{Plan of the paper.} We introduce the precise definitions of lattices, potential and energies in the next Section \ref{sec:def}. Universal optimizers are characterized in Section \ref{subsec:localstable}. Furthermore, the theoretical and numerical study of Lennard-Jones type energies is done in Section \ref{subsec:LJ}.

\section{Setting and notations}\label{sec:def}

\subsection{Lattices}

\begin{defi}[Lattices and their quadratic forms]
For any $d\geq 1$, we call $\mathcal{L}_d$ the space of $d$-dimensional lattices, i.e.
$$
\mathcal{L}_d:=\left\{ L\subset \R^d : L=\bigoplus_{i=1}^d \Z u_i, \{u_i\} \textnormal{ basis of $\R^d$}\right\}.
$$
Each lattice $L\in \mathcal{L}_d$ is associated to a quadratic form $Q_L$ defined on $\Z^d$, and therefore a $d\times d$ matrix $A_L$, such that
$$
Q_L(m):=m^t A_L m:=\left| \sum_{i=1}^d m_i u_i \right|^2,\quad \forall m=(m_1,...,m_d)\in \Z^d,\quad \quad L=\bigoplus_{i=1}^d \Z u_i.
$$
\end{defi}

\begin{defi}[Minimal vectors, bonds]\label{def:bonds}
Let $i\geq 1$ and $L\in \mathcal{L}_d$. The $i$-th smallest distance of $L$ is defined by
$$
\lambda_i(L):=\min \{ |p| : p\in L, |p|>\lambda_{i-1}(L)\},
$$ 
where $\lambda_0(L)=0$ by convention. If $|p-q|=\lambda_1(L)$ where $p,q\in L$, we say that $p$ and $q$ are nearest-neighbors, $p-q$ is a minimal vector of $L$ and the segment $[p,q]=\{(1-t)p + t q : t\in [0,1]\}$ is a called a bond of $L$.
\end{defi}

\begin{defi}[Lattices with prescribed minimal vectors]
 For any $\lambda>0$ and $\mathcal{M}\subset (\Z^d)^r$ where $r\in \N$, we say that $L\in\mathcal{L}_d(\mathcal{M},\lambda)$ if
$$
\forall m\in \mathcal{M}, Q_L(m)=\lambda_1(L)^2=\lambda^2 \quad \textnormal{and}\quad  \forall m\in \Z^d\backslash \mathcal{M}\cup \{0\}, Q_L(m)>\lambda^2,
$$
 i.e. all the lattices $L=\bigoplus_{i=1}^d \Z u_i$ of $\mathcal{L}_d(\mathcal{M},\lambda)$ have exactly $r$ minimal vectors $p=\sum_{i=1} m_i u_i$, where $m=(m_1,...,m_d)\in \mathcal{M}$, of length $\lambda$. Given a lattice $L_0\in \mathcal{L}_d$, $\mathcal{M}_{L_0}$ is defined such that $L_0\in \mathcal{L}_d(\mathcal{M}_{L_0},\lambda)$. Moreover, $\overline{\mathcal{L}_d(\mathcal{M},\lambda)}$ denotes the closure of $\mathcal{L}_d(\mathcal{M},\lambda)$ in $\mathcal{L}_d$. 
 
 \medskip
 
 \noindent Furthermore, in dimension $d\in \{2,3\}$, we define the following set of lattices:
 \begin{align*}
&\mathcal{L}_2(\lambda):=\mathcal{L}_2(\{\pm e_i\}_i,\lambda)\cup \mathcal{L}_2(\{\{\pm e_i\}_i, \pm(e_2-e_1)\},\lambda)=\overline{\mathcal{L}_2(\mathcal{M}_{\Z^2},\lambda)},\\
&\mathcal{L}_3(\lambda):=\mathcal{L}_3(\{\pm e_i\}_i,\lambda)\cup \mathcal{L}_3(\{\{\pm e_i\}_i, \pm (e_1+e_2+e_3) \},\lambda)\cup \mathcal{L}_3(\{\{\pm e_i\}_i, \{\pm (e_j-e_i): j>i\}\},\lambda),
\end{align*}
where $\{e_i\}_{1\leq i \leq d}$ is the canonical basis of $\R^d$, i.e. the lattices with 4 or 6 (resp. 6, 8 or 12) minimal vectors of length $\lambda$. 
\end{defi}

\begin{remark}[Coordination and kissing numbers]\label{rmk:kiss}
The integer $r$ is called the coordination number (or the contact number for the associated lattice packing) of $L$. The maximum of such $r$ among all lattices is called the ``kissing number" (see e.g. \cite{ConSloanPacking,Kissing}), which is $r=6$ (resp. $r=12$) in dimension $d=2$ (resp. $d=3$).
\end{remark}

\begin{remark}[Scaling and closure]
It is useful to notice that $\mathcal{M}_{\lambda L}=\mathcal{M}_L$ for all $\lambda>0$ and all $L\in \mathcal{L}_d$. Furthermore, the ``closure" $\overline{\mathcal{L}_d(\mathcal{M},\lambda)}$ of $\mathcal{L}_d(\mathcal{M},\lambda)$ in $\mathcal{L}_d$ has to be understood as an Euclidean set with the Euclidean distance through the usual parametrization of lattices in $\R^{\frac{d(d+1)}{2}}$ (see e.g. \cite{Terras_1988}).
\end{remark}

As we have written the underlying physical problem in the introduction, our corresponding mathematical problem can therefore be stated as follows:

\begin{itemize}
\item \textbf{Mathematical Problem.} For any fixed $\lambda>0$ and $\mathcal{M}\subset (\Z^d)^r$ for some $r>0$, what is the minimizer of $E_f$ in $\mathcal{L}_d(\mathcal{M},\lambda)$ as well as in $\mathcal{L}_d(\lambda)$?
\end{itemize}

In Table \ref{fig:lattices}, we recall the definition of some important lattices of unit side length as well as their minimal vectors expressed in the canonical basis $\{e_i\}_{1\leq i \leq d}$ of $\R^d$. We will call ``rhombic" any lattice which is not in this list and has more than 4 (resp. 6) bonds in dimension $d=2$ (resp. $d=3$).

\begin{table}[!h]
\centering
\begin{tabular}{|c|c|c|}
\hline
\textbf{Name} & \textbf{Notation and Definition} & \textbf{Minimal vectors $\mathcal{M}\subset (\Z^d)^r$} \\
\hline
SC & $\Z^d= \bigoplus_{i=1}^d \Z e_i$  & $\{ \pm e_i\}_{1\leq i \leq d}$, $r=2d$ \\
\hline
Triangular  &  $\mathsf{A}_2:= \Z\left(1,0 \right)\oplus \Z\left( \frac{1}{2},\frac{\sqrt{3}}{2} \right)  $ & $\{\{\pm e_i\}_i, \pm (e_2-e_1)\}$, $r=6$ \\
\hline
FCC & $\mathsf{D}_3:=\frac{1}{\sqrt{2}}\left[\Z(1,0,1)\oplus \Z(0,1,1)\oplus \Z(1,1,0)  \right]$  &$\{\{\pm e_i\}_i, \{\pm (e_j-e_i)\}_{j>i}\}$, $r=12$ \\
\hline
BCC & $\mathsf{D}_3^*:=\frac{1}{\sqrt{3}}\left[\Z(1,1,-1)\oplus \Z(-1,1,1)\oplus \Z\left(1,-1,1  \right)  \right]$ & $\{\{\pm e_i\}_i, \pm (e_1+e_2+e_3) \}$, $r=8$\\
\hline
\end{tabular}
\caption{Description of the four important types of lattices belonging to $\mathcal{L}_d(\mathcal{M},1)$, $d\in \{2,3\}$, we are studying in this paper: the Simple Cubic (SC), triangular, Face-Centred-Cubic (FCC) and Body-Centred-Cubic lattices. The abbreviate names are given as well as their usual notations and their sets of minimal vectors expressed in the canonical basis $\{e_i\}_{1\leq i \leq d}$ of $\R^d$. Notice that $\Z^2$ will naturally be  called the square lattice (of unit length).}
\label{fig:lattices}
\end{table}

\begin{remark}
We recall that, since $\lambda \mathsf{A}_2$ and $\lambda \mathsf{D}_3$ are the lattices having 6 (resp. 12) bonds of length $\lambda$ in dimension $d=2$ (resp. $d=3$), the results proved in this paper are not really relevant for them. However, each of them belongs to the corresponding set of lattices $\mathcal{L}_d(\lambda)$, $d\in \{2,3\}$ with bond length $\lambda$. Notice that we also have $\lambda \mathsf{D}_3^*\in \overline{\mathcal{L}_3(\mathcal{M}_{\Z^3},\lambda)}$ and $\lambda \mathsf{D}_3\in \overline{\mathcal{L}_3(\mathcal{M}_{\mathsf{D}_3^*},\lambda)}$.
\end{remark}

To finish this section, we give the definition (see also \cite{FieldsEpstein}) of a type of lattices that plays a crucial role in our theoretical results.

\begin{defi}[Strongly eutactic lattices]\label{def:eutactfragile}
We say that a lattice $L\in \mathcal{L}_d$ is strongly eutactic if, for all $n\geq 1$ and for some positive scalars $\{\rho_n\}_{n\geq 1}$, 
$$
A_L^{-1}=\rho_n \sum_{|p|=\lambda_n(L)\atop p\in L} p^t p,\quad \textnormal{where}\quad Q_L(m)=m^t A_L m, \forall m\in \Z^d.
$$
\end{defi}
\begin{remark}[Spherical $2$-designs]
This definition is equivalent with the fact that all layers $D$ of $L$, i.e. the sets of points in the lattice belonging to the sphere $S_R$ centred at the origin and of radius $R$, are $2$-designs, up to a suitable renormalization (see e.g. \cite{Coulangeon:2010uq,Gruber}). This means that for any polynomial $P:\R^d\to \R$ of degree up to $2$, we have
$$
\frac{1}{|S_R|}\int_{S_R} P(x_1,...,x_d)dx=\frac{1}{\sharp D}\sum_{x\in D} P(x_1,...,x_d).
$$
\end{remark}

\subsection{Potentials and lattice energies}

We now define our spaces of radially symmetric interaction potentials and their associated energies.

\begin{defi}[Potentials and energy]\label{def:potenergy}
Let $d\geq 1$. We say that $f\in \mathcal{F}_d$ if $|f(r)|=O(r^{-\frac{d}{2}-\varepsilon})$ as $r\to +\infty$ for some $\varepsilon>0$ and if $f$ can be represented as the Laplace transform of a Radon measure $\mu_f$ on $(0,+\infty)$, i.e.
$$
f(r)=\int_0^{+\infty} e^{-rt}d\mu_f(t).
$$
Furthermore, we say that $f\in \mathcal{CM}_d$ if $f\in \mathcal{F}_d$ and $\mu_f$ is nonnegative.

\medskip

For any $f\in \mathcal{F}_d$, the $f$-energy of a lattice $L\in \mathcal{L}_d$ is defined by the absolutely convergent sum
\begin{equation}\label{eq:Ef}
E_f[L]:=\sum_{p\in L\backslash \{0\}} f(|p|^2).
\end{equation}
\end{defi}
\begin{remark}[Completely monotone potentials]
We notice that, by Hausdorff-Bernstein-Widder Theorem \cite{Bernstein-1929}, any $f\in \mathcal{CM}_d$ is completely monotone (which explains the notation), in the sense that, for all $r>0$ and all $k\in \N_0$, $(-1)^k f^{(k)}(r)\geq 0$. Also, as explained by Schoenberg in \cite{Schoenberg}, we know,  using Bochner's Theorem \cite{Bochner}, that $f\in \mathcal{CM}_d$  if and only if $r\mapsto f(r^2)$ is the Fourier transform of a nonnegative finite Borel measure on $\R$. 
\end{remark}
Two important lattice functions can be defined by \eqref{eq:Ef} for specific interaction potentials $f(r)\in \{ e^{-\pi \alpha r}, r^{-s/2} \}$: the lattice theta function $\theta_L(\alpha)$ and the Epstein zeta function $\zeta_L(s)$ respectively defined for any $\alpha>0$ and $s>d$, by
\begin{equation}\label{eq:zetatheta}
\theta_L(\alpha):=\sum_{p\in L} e^{-\pi \alpha |p|^2},\quad \textnormal{and}\quad \zeta_L(s):=\sum_{p\in L\backslash \{0\}} \frac{1}{|p|^s}.
\end{equation}
Notice that it is traditional to add the $p=0$ term to $\theta_L(\alpha)$ which makes no difference while minimizing $L\mapsto \theta_L(\alpha)$ among any type of lattices including the origin.

\medskip

The next definition concerns the notion of universal optimality originally introduced by Cohn and Kumar in \cite{CohnKumar} and showed in many contexts \cite{Mont,CKMRV2Theta,BFMaxTheta20,BDefects20}. Contrary to \cite{CohnKumar}, we do not stay in the set of configurations with fixed density but we simply say that an optimality property is universal if it holds for $E_f$ for any completely monotone function $f$ among lattices in a certain subset. In the following, the term ``optimal" can be replaced by any type of optimality: local/global minimality/maximality.

\begin{defi}[Universal optimality]\label{def:univopt}
Let $\mathcal{S}\subset \mathcal{L}_d$. We say that a lattice $L_0$ is universally optimal in $\mathcal{S}$ if one of these equivalent properties holds:
\begin{itemize}
\item for all $\alpha>0$, $L_0$ is optimal for $L\mapsto \theta_L(\alpha)$ in  $\mathcal{S}$;
\item for any $f\in \mathcal{CM}_d$, $L_0$ is optimal for $E_f$ in $\mathcal{S}$.
\end{itemize}
\end{defi}
\begin{remark}[Lattice theta function and universal optimality]
The equivalence stated in Definition \ref{def:univopt} is a straightforward application (see e.g. \cite{CohnKumar,BetTheta15}) of the fact that, by absolute summability,
$$
E_f[L]=\int_0^\infty \left[ \theta_L\left( \frac{\alpha}{\pi} \right)-1\right] d\mu_f(\alpha).
$$
Indeed, if $L_0$ is optimal for $L\mapsto \theta_L(\alpha)$ in $\mathcal{S}\subset \mathcal{L}_d$, $\forall \alpha>0$, then $L_0$ is optimal for $E_f$ where $f\in \mathcal{CM}_d$ since $\mu_f$ is nonnegative. The reciprocal implication is clear since $r\mapsto e^{-\pi \alpha r}\in \mathcal{CM}_d$ for all $\alpha>0$.
\end{remark}

\section{Universal minimality and consequences}\label{subsec:localstable}

Our first result is a generalization of Fields' work \cite{FieldsEpstein} who studied the Epstein zeta functions.

\begin{thm}[\textbf{Universal minimality of strongly eutactic lattices}]\label{thm:univlocal}
If $L_0$ is strongly eutactic then, for all $\alpha>0$, it is the only minimizer of $L\mapsto \theta_L(\alpha)$ in $\mathcal{L}_d(\mathcal{M}_{L_0},\lambda)$. In particular, all the strongly eutactic lattices $L_0$ are universally minimal in $\mathcal{L}_d(\mathcal{M}_{L_0},\lambda)$. Furthermore, in dimension $d\in \{2,3\}$, the only universally minimal lattices in $\mathcal{L}_d(\mathcal{M}_{L_0},\lambda)$ are $L_0\in\{ \lambda \Z^2,  \lambda \mathsf{A}_2, \lambda \Z^3, \lambda \mathsf{D}_3,  \lambda \mathsf{D}_3^*\}$.
\end{thm}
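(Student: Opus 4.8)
The plan is to pass from lattices to their Gram matrices and to exploit the \emph{linearity} of the bond constraints. Writing $\Theta_\alpha(A):=\sum_{m\in\Z^d}e^{-\pi\alpha\, m^tA m}$ so that $\theta_L(\alpha)=\Theta_\alpha(A_L)$, the set $\mathcal{L}_d(\mathcal{M}_{L_0},\lambda)$ corresponds to Gram matrices lying in the intersection of the positive-definite cone with the affine subspace $\mathcal{A}:=\{A : m^tA m=\lambda^2,\ \forall m\in\mathcal{M}_{L_0}\}$; the remaining minimality inequalities $m^tAm>\lambda^2$ only cut out an open subset, so it is enough to minimize $\Theta_\alpha$ over $\mathcal{A}$ intersected with the cone and to check afterwards that the optimum is feasible. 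The crucial structural point is that each summand $e^{-\pi\alpha\, m^tAm}$ is the exponential of an affine function of $A$, hence convex in $A$; moreover $\partial_t^2\Theta_\alpha(A+tB)\big|_{t=0}=(\pi\alpha)^2\sum_{m}(m^tBm)^2e^{-\pi\alpha\, m^tAm}$, which is strictly positive unless $m^tBm=0$ for every $m\in\Z^d$, i.e. unless $B=0$. Thus $\Theta_\alpha$ is strictly convex on the space of symmetric matrices, which is exactly where convexity combined with the linear constraint will buy uniqueness.

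Next I would verify the first-order condition at $A_{L_0}$. With the trace inner product $\langle X,Y\rangle=\Tr(XY)$ and $m^tBm=\langle B,mm^t\rangle$, the first variation is
\[
\partial_t\Theta_\alpha(A_{L_0}+tB)\big|_{t=0}=-\pi\alpha\Big\langle B,\ \sum_{n\ge1}e^{-\pi\alpha\lambda_n(L_0)^2}\,S_n\Big\rangle,\qquad S_n:=\sum_{Q_{L_0}(m)=\lambda_n(L_0)^2}mm^t .
\]
The tangent space to $\mathcal{A}$ at $A_{L_0}$ is $T=\{B:\langle B,mm^t\rangle=0,\ \forall m\in\mathcal{M}_{L_0}\}$, whose orthogonal complement is $V:=\mathrm{span}\{mm^t:m\in\mathcal{M}_{L_0}\}$. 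If $L_0$ is strongly eutactic then $S_n=\rho_n^{-1}A_{L_0}^{-1}$ for every shell $n$; in particular the minimal shell ($n=1$) yields $A_{L_0}^{-1}=\rho_1\sum_{m\in\mathcal{M}_{L_0}}mm^t\in V$, so $\langle B,A_{L_0}^{-1}\rangle=0$ for all $B\in T$ and the first variation vanishes on $T$. Hence $A_{L_0}$ is a critical point of $\Theta_\alpha$ in the relative interior of $\mathcal{A}\cap\{\mathrm{PD}\}$. Since a strictly convex function possessing a relative-interior critical point on a convex set attains its unique minimum precisely there, $A_{L_0}$ is the unique minimizer of $\theta_L(\alpha)$ over $\mathcal{A}$, and a fortiori over $\mathcal{L}_d(\mathcal{M}_{L_0},\lambda)$ and its closure (degenerate limits are excluded because $\theta$ blows up as $\det A\to0$), for every $\alpha>0$. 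Universal minimality is then immediate from the equivalence in Definition~\ref{def:univopt}, integrating against the nonnegative measure $\mu_f$.

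For the converse in dimensions $2$ and $3$ I would run the argument backwards. If $L_0$ is universally minimal then $A_{L_0}$ minimizes $\Theta_\alpha$ for every $\alpha$, which by strict convexity is equivalent to the criticality condition $\sum_{n\ge1}e^{-\pi\alpha\lambda_n(L_0)^2}S_n\in V$ for all $\alpha>0$. The functions $\alpha\mapsto e^{-\pi\alpha\lambda_n(L_0)^2}$ attached to distinct shell radii are linearly independent on $(0,\infty)$, so necessarily $S_n\in V$ for every $n$. The remaining task is to upgrade ``every shell sum lies in $V$'' to genuine strong eutaxy (each shell a $2$-design). This is where the low dimension enters: within each admissible family — parametrized by the off-diagonal Gram entries at fixed bond length, namely the rhombic family in $d=2$ and the two non-rigid families with $r\in\{6,8\}$ in $d=3$ — one checks that the condition already on the second shell $S_2$ forces the off-diagonal entries to vanish, leaving only the strongly eutactic representative, while the maximal-coordination lattices $\lambda\mathsf{A}_2$ and $\lambda\mathsf{D}_3$ are rigid (single-point families) and hence trivially minimal. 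Equivalently one invokes the classical enumeration that the only strongly eutactic lattices in dimensions $2$ and $3$ are $\lambda\Z^2,\lambda\mathsf{A}_2$ and $\lambda\Z^3,\lambda\mathsf{D}_3,\lambda\mathsf{D}_3^*$.

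The main obstacle is precisely this converse implication: the convexity machinery only delivers the clean necessary condition $S_n\in V$ for all $n$, which is a priori weaker than strong eutaxy, and closing the gap requires either the explicit second-shell computation in each low-dimensional family or an appeal to the known classification of strongly eutactic lattices. By contrast, the strict positivity of the second variation, the identification $T^\perp=V$, and the passage from the $\theta$-statement to full universal minimality are all routine.
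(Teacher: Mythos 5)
Your treatment of the main implication (strong eutaxy $\Rightarrow$ unique minimality of $L\mapsto\theta_L(\alpha)$ for every $\alpha$ $\Rightarrow$ universal minimality) is correct and is essentially the paper's own argument: both work on the Gram-matrix side, exploit that the bond constraints are affine, establish criticality of the strongly eutactic lattice, and conclude via strict convexity of $A\mapsto\sum_{m}e^{-\pi\alpha\,m^tAm}$, whose second variation in a direction $B$ is $(\pi\alpha)^2\sum_m (m^tBm)^2e^{-\pi\alpha\,m^tAm}$, vanishing only for $B=0$. If anything, your criticality step is cleaner than the paper's: you read it off directly from $S_n=\rho_n^{-1}A_{L_0}^{-1}$ together with $A_{L_0}^{-1}=\rho_1\sum_{m\in\mathcal{M}_{L_0}}mm^t\in V$, whereas the paper derives a necessary condition on ratios of shell sums by an $\alpha\to\infty$ limiting argument and then asserts its sufficiency.

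The genuine gap is in the ``Furthermore'' (classification) part, and neither of your two proposed completions closes it as stated. Route (b), invoking the enumeration of strongly eutactic lattices in dimensions $2$ and $3$, is logically insufficient: all you have proved about a universal minimizer is that $S_n\in V$ for every $n$, which, as you yourself note, is a priori weaker than strong eutaxy; the list of strongly eutactic lattices is of no use until you know your minimizer belongs to it. The missing bridge is precisely Fields' theorem that local minimality of $L\mapsto\zeta_L(s)$ for all admissible exponents, among lattices with prescribed minimal vectors, forces strong eutaxy --- and this theorem, combined with the classification (Gruber), is exactly what the paper cites to finish in one sentence. Route (a), the family-by-family check, is the right idea but incomplete: the statement quantifies over \emph{all} lattices $L_0$, so besides the rhombic family ($r=4$, $d=2$) and the spanning $r\in\{6,8\}$ families in $d=3$, you must rule out every configuration whose minimal vectors do not span $\R^d$ ($r=2$ in $d=2$; $r=2$, $r=4$, and the coplanar hexagonal $r=6$ configuration in $d=3$). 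These cases can in fact be killed by your own criterion --- any shell containing a vector $m$ with a nonzero coordinate outside the span of $\mathcal{M}_{L_0}$ contributes a positive diagonal entry that no matrix of $V$ possesses, hence $S_n\notin V$ --- but this must be said, since otherwise the converse is only proved for a restricted list of $\mathcal{M}$'s. Finally, for the spanning $r=6$ family in $d=3$, the claim that ``$S_2\in V$ forces the off-diagonal Gram entries to vanish'' is not a one-liner: distinct pairs $e_i\pm e_j$ can land in the same shell when the off-diagonal entries satisfy coincidences such as $a_{ij}=-a_{k\ell}$, so the shell sums must be computed with these degeneracies taken into account before concluding. Either carry out this complete enumeration, or do as the paper does and quote Fields' equivalence theorem together with the classification of strongly eutactic lattices.
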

\begin{proof}
We follow the lines of \cite[Thm. 1]{FieldsEpstein}. We seek to minimize $L\mapsto \theta_L(\alpha)$ subject to the linear constraint, with respect to the coefficient of the quadratic form $Q_L(m)=\sum_{i,j} a_{i,j} m_i m_j$, written
\begin{equation}\label{eq:constraint}
Q_L(m_i)=\lambda^2\quad \forall i\in \{1,...,r\}, \quad \forall m_i\in \mathcal{M}:=\mathcal{M}_L.
\end{equation}
Let us show that any strongly eutactic lattice is a critical point of  $L\mapsto \theta_L(\alpha)$ on $\mathcal{L}_d(\mathcal{M},\lambda)$ for all $\alpha>0$. Being a critical point of $L\mapsto \theta_L(\alpha)$ for all $\alpha>0$ and satisfying the constraint \eqref{eq:constraint} means that there exists $\{\mu_k\}_k\subset \R^r$ such that, for all $i,j$ and all $\alpha>0$,
\begin{equation}\label{eq:criticpoint}
\partial_{a_{ij}}\left\{ \sum_{p\in \Z^d} e^{-\pi \alpha Q_L(p)})+\sum_{k=1}^r \mu_k Q_L(m_k) \right\}=0.
\end{equation}
We therefore obtain, for all $i,j$ and all $\alpha>0$,
$$
\sum_{p\in \Z^d} p_i p_j e^{-\pi \alpha Q_L(p)}=\sum_{m\in \mathcal{M}} \gamma_m m_i m_j,\quad \gamma_m:=\frac{\mu_{m_k}}{\pi \alpha} \textnormal{ where }  m=m_k.
$$
We hence get, for all $i,j,k,\ell$ and all $\alpha>0$,
$$
 \frac{\displaystyle\sum_{m\in \mathcal{M}} \gamma_m m_i m_j}{\displaystyle\sum_{m\in \mathcal{M}} \gamma_m m_k m_\ell}=\frac{\displaystyle\sum_{p\in \Z^d} p_i p_j e^{-\pi \alpha Q_L(p)}}{\displaystyle\sum_{p\in \Z^d} p_k p_\ell e^{-\pi \alpha Q_L(p)}}=\frac{\displaystyle \sum_{n=1}^\infty \sum_{|p|=\lambda_n(L)} p_i p_j e^{-\pi \alpha |p|^2}}{\displaystyle \sum_{n=1}^\infty \sum_{|p|=\lambda_n(L)} p_k p_\ell e^{-\pi \alpha |p|^2}}.
$$
As we successively multiply numerator and denominator by $e^{-\pi \alpha \lambda_n(L)^2}$ and letting $\alpha$ going to infinity, we find that a necessary condition for \eqref{eq:criticpoint} to hold for all $\alpha>0$ is that, for all $i,j,k,\ell$ and all $n\geq 1$,
$$
\frac{\displaystyle \sum_{|p|=\lambda_n(L)} p_i p_j}{\displaystyle \sum_{|p|=\lambda_n(L)} p_k p_\ell}= \frac{\displaystyle\sum_{m\in \mathcal{M}} \gamma_m m_i m_j}{\displaystyle\sum_{m\in \mathcal{M}} \gamma_m m_k m_\ell}.
$$
Writing $V_n:=\sum_{|p|=\lambda_n(L)}p^{t} p$, it means that $V_1=\rho_n V_n$ for all $n\geq 2$ for some scalars $\{\rho_n\}_{n\geq 2}$.  This is by definition the case if $L$ is strongly eutactic (see Definition \ref{def:eutactfragile}). It is simple to check (see also \cite{Gruber}) that it implies the same for any fixed $\alpha>0$. It follows that any strongly eutactic lattice is a critical point of  $L\mapsto \theta_L(\alpha)$ on $\mathcal{L}_d(\mathcal{M},\lambda)$ for all $\alpha>0$.

\medskip

Let us show that $L\mapsto \theta_L(\alpha)$ is convex in $\mathcal{L}_d(\mathcal{M},\lambda)$ for all $\alpha>0$. Indeed, let $ L\in \mathcal{L}_d(\mathcal{M},\lambda)$. We now show that, if $H=(h_{ij})\neq 0$ and writing $Q_{L}[m]=m^t A_{L} m=\sum_{i,j} a_{ij} m_i m_j$, then, for all $\alpha>0$
$$
 m^t (A_{L}+ H) m=\lambda^2, \forall m\in \mathcal{M} \Rightarrow \sum_{i,j,k,\ell} \frac{\partial^2}{\partial a_{ij}\partial a_{k \ell}}\theta_{L}(\alpha) h_{ij}h_{kl}>0.
$$
Again, the same arguments as in \cite[Thm. 1]{FieldsEpstein} works since our constraint is linear, remarking that, for all $\alpha>0$,
\begin{align*}
\sum_{i,j,k,\ell}\frac{\partial^2}{\partial a_{ij}\partial a_{k \ell}}\theta_{L}(\alpha) h_{ij}h_{kl}&=\alpha^2 \pi^2\sum_{i,j,k,\ell} \sum_{p\in L} p_i p_j p_k p_\ell e^{-\pi \alpha |p|^2}h_{ij} h_{k\ell}\\
&=\alpha^2\pi^2 \sum_{n=0}^\infty \sum_{|p|=\lambda_n(L)}\sum_{i,j,k,\ell}p_i p_j p_k p_\ell e^{-\pi \alpha |p|^2}h_{ij} h_{k\ell}\\
&=\alpha^2 \pi^2 \sum_{n=0}^\infty \sum_{|p|=\lambda_n( L)}\left(\sum_{i,j} p_i p_j h_{ij} e^{-\pi \alpha \frac{\lambda_n(L)^2}{2}} \right)\left(\sum_{k,\ell} p_k p_\ell h_{k \ell} e^{-\pi \alpha \frac{\lambda_n(L)^2}{2}} \right)\\
&=\alpha^2 \pi^2 \sum_{n=0}^\infty \sum_{|p|=\lambda_n(L)}\left(\sum_{i,j} p_i p_j h_{ij} e^{-\pi \alpha \frac{\lambda_n(L)^2}{2}} \right)^2>0,
\end{align*}
since $\sum_{i,j} p_i p_j h_{ij}=0$ for every $p\in L$ would imply that $H=0$. Therefore, the Hessian of the lattice theta function is positive definite for all lattices in the interior of a convex set (given by the linear constraints), which means that $L\mapsto \theta_L(\alpha)$ is convex and any critical point is a global minimizer. Notice that this proof follows from the linearity of the constraint combined with the additivity property of the exponential function.

\medskip

Since any universal minimizer is also locally minimal for any Epstein zeta function, we finally refer to \cite{FieldsEpstein,Gruber} to conclude that the only universally minimal lattices in dimensions $d\in \{2,3\}$ are the square, triangular, SC, BCC and FCC lattices, which completes the proof.
\end{proof}

Thus, Theorem \ref{thm:univlocal} gives a new global optimality result of all the SC lattices $\lambda \Z^d$ in $\mathcal{L}_d(\{\pm e_i\}_{i},\lambda)$ -- that are known to be strongly eutactic (see e.g. \cite[Sec. 4]{CoulLazzarini}) -- as well as the first minimality result for the BCC lattice $\lambda \mathsf{D}_3^*$ in $\mathcal{L}_3(\{\{\pm e_i\}_{1\leq i\leq 3}, \pm (e_1+e_2+e_3) \},\lambda)$. 

\begin{remark}[Properties of universal minimizers - Parametrization]\label{rmk:propunivloc}
Since any universally minimal lattice in $\mathcal{L}_d(\mathcal{M},\lambda)$ is also locally minimal for $L\mapsto \zeta_L(s)$, it follows by  \cite[Cor. 2]{FieldsEpstein} that there is only a finite number of such lattices in $\R^d$ and, by \cite[Cor. 1]{FieldsEpstein} that any such lattice has $d$ linearly independent minimum vectors. It is for example not surprising that no lattice with exactly 2 bonds can be a minimizer, since it is sufficient to infinitely elongate the other vectors in order to decrease the energy. This also implies that any universal (local) minimizer belongs to $\overline{\mathcal{L}_d(\mathcal{M},\lambda)}$ which can be written as the set of lattices $\{ L_t\}_{t\in T}$ where the set of parameters $T$ (typically a set of angles) can be assumed to be compact. For instance, the set $\mathcal{L}_2(\lambda)=\overline{\mathcal{L}_2(\mathcal{M}_{\Z^2},\lambda})$ can be viewed -- as a consequence of the quadratic form reduction method (see e.g. \cite{Mont,Terras_1988}) -- as the family of 2d lattices $\{\lambda L_t : t\in T:=[\pi/3,\pi/2]\}$ where $L_t=\Z(1,0)\oplus \Z( \cos t, \sin t)$, i.e. $\lambda L_t$ is a general rhombic lattice (including the square and the triangular one) of side length $\lambda$ and angle $t$. 
\end{remark}

A simple consequence of Theorem \ref{thm:univlocal} gives the following result. 

\begin{corollary}[\textbf{Comparison of important lattices 
- Universal maximizers}]\label{cor:2d3d}
For all $\lambda>0$, all $L\in \mathcal{L}_2(\mathcal{M}_{\Z^2},1)$, all $L'\in \mathcal{L}_3(\mathcal{M}_{\Z^3},1)$, all $L''\in \mathcal{L}_3(\mathcal{M}_{\mathsf{D}_3^*},1)$,  all $f\in \mathcal{CM}_2$ and all $g\in \mathcal{CM}_3$,
$$
E_f[\lambda \Z^2]\leq E_f[\lambda L]<E_f[\lambda \mathsf{A}_2],\quad \textnormal{and}\quad E_g[\lambda \Z^2]\leq E_g[\lambda L']<E_g[\lambda \mathsf{D}_3^*]\leq E_g[\lambda L'']<E_g[\lambda \mathsf{D}_3].
$$
In particular, $\lambda \mathsf{A}_2$ and $\lambda \mathsf{D}_3$ are universal maximizers respectively in $\mathcal{L}_2(\lambda)$ and $\mathcal{L}_3(\lambda)$.
\end{corollary}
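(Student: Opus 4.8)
The plan is to reduce everything, via the equivalence in Definition \ref{def:univopt}, to the single lattice theta function $L\mapsto\theta_L(\alpha)$ for an arbitrary fixed $\alpha>0$, and then to exploit the two structural facts established in the proof of Theorem \ref{thm:univlocal}: that on each set $\mathcal{L}_d(\mathcal{M},\lambda)$ the bond constraint is \emph{linear} in the entries $a_{ij}$ of the quadratic form, and that $\theta_L(\alpha)$ is \emph{strictly convex} in those entries. The lower (minimality) inequalities $E_f[\lambda\Z^2]\le E_f[\lambda L]$, $E_g[\lambda\Z^3]\le E_g[\lambda L']$ and $E_g[\lambda\mathsf{D}_3^*]\le E_g[\lambda L'']$ are then immediate from Theorem \ref{thm:univlocal}, since $\Z^2,\Z^3,\mathsf{D}_3^*$ are strongly eutactic and hence the unique minimizers in their respective families. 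Only the upper (maximality) inequalities and the global statement for $\mathcal{L}_d(\lambda)$ require new work, and the guiding principle is that a strictly convex function on a compact convex set attains its maximum only at extreme points of the boundary.

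First I would treat the two-dimensional case, which is the cleanest. By Remark \ref{rmk:propunivloc} the closure $\overline{\mathcal{L}_2(\mathcal{M}_{\Z^2},\lambda)}=\mathcal{L}_2(\lambda)$ is the one-parameter family $\{\lambda L_t: t\in[\pi/3,\pi/2]\}$; after fixing the diagonal of $Q_{L_t}$ by the bond constraint, the only free coordinate is $a_{12}=\lambda^2\cos t$, ranging over the segment $[0,\lambda^2/2]$ with $\Z^2$ at one endpoint and $\mathsf{A}_2$ at the other. Since $\theta_{L_t}(\alpha)$ is strictly convex in $a_{12}$ and attains its minimum at the interior critical point $\Z^2$ (where $a_{12}=0$), it is strictly increasing on the segment, so its maximum over the closed family is attained only at $\mathsf{A}_2$. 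This yields $E_f[\lambda L]<E_f[\lambda\mathsf{A}_2]$ for every $L$ in the open family and the universal maximality of $\lambda\mathsf{A}_2$ in $\mathcal{L}_2(\lambda)$.

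For the three-dimensional chain I would run the same scheme twice, one family at a time, using the closure relations recorded just before the corollary: $\lambda\mathsf{D}_3^*\in\overline{\mathcal{L}_3(\mathcal{M}_{\Z^3},\lambda)}$ and $\lambda\mathsf{D}_3\in\overline{\mathcal{L}_3(\mathcal{M}_{\mathsf{D}_3^*},\lambda)}$. After fixing the diagonal, the SC-family is a three-parameter convex body in $(a_{12},a_{13},a_{23})$ and the BCC-family a two-parameter convex slice; on each, $\theta_L(\alpha)$ is strictly convex with its unique interior minimum at $\Z^3$, resp.\ $\mathsf{D}_3^*$, so its maximum sits on the boundary, and the goal is to show it is attained at the prescribed degeneration ($\mathsf{D}_3^*$ for the SC-family, $\mathsf{D}_3$ for the BCC-family). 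Separately I would establish the ordering $E_g[\lambda\Z^3]<E_g[\lambda\mathsf{D}_3^*]<E_g[\lambda\mathsf{D}_3]$ at fixed bond length from the increasing coordination numbers $6<8<12$ (equivalently the decreasing covolumes), which controls both the small- and large-$\alpha$ regimes of $\theta$. Stitching the four comparisons together gives the chain, and since $\mathcal{L}_3(\lambda)$ is the union of the SC-, BCC- and FCC-families, the top of the chain exhibits $\lambda\mathsf{D}_3$ as the universal maximizer there.

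The hard part will be the boundary identification in three dimensions. In contrast with the 2D case, where the closed family has a single non-generic boundary point ($\mathsf{A}_2$), the closure of the 3D SC-family is a genuine polytope whose extreme points are not unique up to the symmetry of the constraint, and in fact it already contains $\mathsf{D}_3$ as well as $\mathsf{D}_3^*$. So the delicate step is to pin down precisely \emph{which} extreme configuration maximizes $\theta_L(\alpha)$ within each prescribed family, and to propagate the strict inequalities correctly through the nested degenerations; this is where I expect to lean most heavily on the symmetry of the constraint polytope, on the convexity and critical-point characterization from Theorem \ref{thm:univlocal}, and on the explicit comparison of the three special lattices $\Z^3,\mathsf{D}_3^*,\mathsf{D}_3$. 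The strictness everywhere then comes for free from strict convexity together with the openness of the families $\mathcal{L}_d(\mathcal{M},\lambda)$, since the maximizing lattices lie on the boundary and are never attained inside.
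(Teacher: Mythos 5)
Your reduction to a single theta function, the three lower bounds via Theorem \ref{thm:univlocal}, and the two-dimensional maximality argument (strict convexity in $a_{12}$ on the segment, minimum at the square, maximum at the triangular endpoints) are all correct, and they are in substance the paper's own route, made more explicit. The same ``convexity plus extreme points'' scheme also yields the last inequality $E_g[\lambda L'']<E_g[\lambda\mathsf{D}_3]$ and the final universal-maximality claims, because every vertex of the relevant constraint polytopes is an FCC lattice. The genuine problem is the step you yourself flag as ``the hard part'': showing that $\lambda\mathsf{D}_3^*$ maximizes $\theta_L(\alpha)$ over $\overline{\mathcal{L}_3(\mathcal{M}_{\Z^3},\lambda)}$. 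This step is not merely delicate; it is false, so your scheme cannot close the middle inequality. Concretely, in coordinates $(a_{12},a_{13},a_{23})$ with $a_{ii}=\lambda^2$ fixed, the closed SC-family is the polytope cut out by $|a_{ij}|\le \lambda^2/2$ (from $e_i\mp e_j$) and the four constraints coming from $\pm e_1\pm e_2\pm e_3$, and \emph{every} vertex of this polytope -- e.g. $(\tfrac{\lambda^2}{2},\tfrac{\lambda^2}{2},\tfrac{\lambda^2}{2})$ or $(-\tfrac{\lambda^2}{2},-\tfrac{\lambda^2}{2},0)$ -- has $12$ minimal vectors, i.e.\ is an FCC lattice; $\lambda\mathsf{D}_3^*$ sits only in the relative interior of a facet. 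By your own convexity argument the maximum of $\theta_L(\alpha)$ over the closed SC-family is therefore $\theta_{\lambda\mathsf{D}_3}(\alpha)$, and since the BCC-family part of your plan gives $\theta_{\lambda\mathsf{D}_3^*}(\alpha)<\theta_{\lambda\mathsf{D}_3}(\alpha)$ for all $\alpha>0$, any lattice $L'$ with Gram entries $a_{ii}=1$, $a_{ij}=\tfrac12-\epsilon$ (which has exactly the six minimal vectors $\pm e_i$ for every small $\epsilon>0$, hence lies in $\mathcal{L}_3(\mathcal{M}_{\Z^3},1)$) satisfies $E_g[\lambda L']\to E_g[\lambda\mathsf{D}_3]>E_g[\lambda\mathsf{D}_3^*]$ as $\epsilon\to0^+$ for, say, Gaussian $g$. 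Thus $E_g[\lambda L']<E_g[\lambda\mathsf{D}_3^*]$ fails for such $L'$, and no boundary identification can rescue it.

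For comparison, you should know that the paper's proof does not resolve this point either: it attributes the whole ``first double inequality'' $E_g[\lambda\Z^3]\le E_g[\lambda L']<E_g[\lambda\mathsf{D}_3^*]$ to the universal minimality of $\lambda\Z^3$ in its family, which only yields the left half, and its maximality arguments (uniqueness of the lattice achieving the kissing number) only justify the tops of the two chains, i.e.\ the comparisons against $\lambda\mathsf{A}_2$ and $\lambda\mathsf{D}_3$. So your instinct that the SC-family-versus-BCC comparison is the crux is exactly right; it is the one assertion of the Corollary that neither your plan nor the paper substantiates, and the computation above shows it is false as stated. One further repair inside your proposal: do not derive the ordering $E_g[\lambda\Z^3]<E_g[\lambda\mathsf{D}_3^*]<E_g[\lambda\mathsf{D}_3]$ from coordination numbers and covolumes, since matching the large-$\alpha$ and small-$\alpha$ regimes of $\theta$ says nothing about intermediate $\alpha$; the clean argument is already in your framework: $\lambda\mathsf{D}_3^*$ lies in the closed SC-polytope, on which $\lambda\Z^3$ is the unique minimizer of the strictly convex function $\theta_{\cdot}(\alpha)$, and $\lambda\mathsf{D}_3$ lies in the closed BCC-polygon, on which $\lambda\mathsf{D}_3^*$ is the unique minimizer.
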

\begin{proof}
The first double inequality comes from the universal minimality of $\lambda \Z^2$ in $\mathcal{L}_2(\{\pm e_i \}_i, \lambda)$ as well as the fact that the triangular lattice, which is the only one that has $6$ bonds (the maximum possible), reaches the maximum of $E_f$ in $\mathcal{L}_2(\lambda)$ as explained in Remark \ref{rmk:kiss}. The second chain of inequalities is a combination of the universal minimality of the SC lattice $\lambda \Z^3$ in $\mathcal{L}_3(\{\pm e_1, \pm e_2, \pm e_3 \}, \lambda)$ (which gives the first double inequality) and the one of the BCC lattice $\lambda \mathsf{D}_3^*$ in $\mathcal{L}_3(\{\{\pm e_i\}_i, \pm (e_1+e_2+e_3) \},\lambda)$ (which gives the last double inequality). Moreover, the universal maximality of $\lambda \mathsf{D}_3$ is ensures by the fact that it is the only lattice that has 12 bonds (the maximum possible again).
\end{proof}

In particular, contrary to the fixed density case where the FCC (resp. BCC) lattice is expected to be a global minimizer of $L\mapsto \theta_L(\alpha)$ for all $\alpha\geq 1$ (resp. $\alpha\leq 1$) according to Sarnak-Strombergsson Conjecture \cite[Eq. (43)]{SarStromb}, there is no phase transition of this type in the prescribed bonds case.

\begin{remark}[Generalization to convex potentials]
 Notice that the optimality result for $\lambda \Z^2$ can be easily generalized to convex potentials $f$ such that $|f(r)|=O(r^{-1-\varepsilon})$ as $r\to +\infty$ for some $\varepsilon>0$. We also expect Theorem \ref{thm:univlocal} and then Corollary \ref{cor:2d3d} to hold for convex functions as well, in any dimension $d$. Since we do not have the proof of such general property, we have chosen to not include the two-dimensional proof to this work.
\end{remark}

\section{Application to the Lennard-Jones type potentials}\label{subsec:LJ}

\subsection{Theoretical optimality result}
We now show that the result proved in Theorem \ref{thm:univlocal} also holds at small scale, i.e. for small $\lambda$, when $f$ is a Lennard-Jones type potential defined by
\begin{equation}\label{eq:LJdef}
f(r)=\frac{a}{r^{p}}-\frac{b}{r^q},\quad (a,b)\in (0,\infty),\quad p>q>d/2.
\end{equation}
Contrary to the unit density case where the space of lattices is not compact and for which we derived in \cite{BetTheta15,BDefects20} methods based on Laplace transforms in order to find a range of densities for which a lattice is minimal, the prescribed bonds case is much more easier. Indeed, since our space of lattices can be viewed as compact (see Remark \ref{rmk:propunivloc}), one can get with any degree of accuracy the sharp bounds on $\lambda$ (called $\lambda_0,\lambda_1$ in the following proposition) ensuring the optimality in $\overline{\mathcal{L}(\mathcal{M},\lambda)}$ of any universal optimizer $L$ (see Remark \ref{rmk:LJlambda0}).

\begin{prop}[\textbf{Optimality for Lennard-Jones type potentials}]\label{prop:LJ}
Let $f$ be a Lennard-Jones type potential defined by \eqref{eq:LJdef}. Let $\lambda L_0$ be the unique universal minimizer in $\mathcal{L}_d(\mathcal{M}_{L_0},\lambda)$. Then there exists $\lambda_0$ such that $\lambda L_0$ is minimal in $\mathcal{L}_d(\mathcal{M}_{L_0},\lambda)$ if and only if $\lambda \in (0,\lambda_0)$.

\noindent Furthermore, if $L_1$ is a unique universal maximizer, possibly local, in $\overline{\mathcal{L}_d(\mathcal{M},\lambda)}$ for some $\mathcal{M}\subset (\Z^d)^r$ (or in $\mathcal{L}_d(\lambda)$), then there exists $\lambda_1$ such that $\lambda L_1$ is (possibly locally) minimal in $\overline{\mathcal{L}_d(\mathcal{M},\lambda)}$ (or in $\mathcal{L}_d(\lambda)$) if and only if $\lambda>\lambda_1$.
\end{prop}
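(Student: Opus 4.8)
The plan is to exploit the compactness of the parameter space established in Remark \ref{rmk:propunivloc} together with the explicit two-parameter structure of the Lennard-Jones potential. Writing $f(r)=a r^{-p}-b r^{-q}$, the energy of a scaled lattice decomposes cleanly: for $L\in\overline{\mathcal{L}_d(\mathcal{M},1)}$ parametrised by $t\in T$ (with $T$ compact), one has
\[
E_f[\lambda L_t]=a\,\lambda^{-2p}\,\zeta_{L_t}(2p)-b\,\lambda^{-2q}\,\zeta_{L_t}(2q),
\]
since $E_{r^{-s/2}}[\lambda L]=\lambda^{-s}\zeta_L(s)$ and the sum $\sum_{p\in L}|p|^{-s}$ scales homogeneously. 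Thus the whole $\lambda$-dependence is factored into the two monomials $\lambda^{-2p}$ and $\lambda^{-2q}$, and the $t$-dependence is carried entirely by the two Epstein zeta functions $\zeta_{L_t}(2p)$ and $\zeta_{L_t}(2q)$, both of which are continuous on the compact set $T$.

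First I would treat the minimizer claim. By Theorem \ref{thm:univlocal}, $L_0$ is the unique universal minimizer, hence it is the strict minimizer of $\zeta_{L_t}(s)$ over $t\in T$ for every $s>d$; in particular it strictly minimizes both $\zeta_{L_t}(2p)$ and $\zeta_{L_t}(2q)$. For a competitor $L_t$ I would form the difference $g_t(\lambda):=E_f[\lambda L_t]-E_f[\lambda L_0]$, which equals $a\lambda^{-2p}\Delta_p(t)-b\lambda^{-2q}\Delta_q(t)$ where $\Delta_p(t),\Delta_q(t)\geq 0$ are the (nonnegative, and strictly positive for $t$ away from the minimizer) differences of zeta values. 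As $\lambda\to 0$ the repulsive $\lambda^{-2p}$ term dominates (since $p>q$), so $g_t(\lambda)>0$ for small $\lambda$; since $\Delta_p(t)>0$ this gives $\lambda L_0$ strict minimality for $\lambda$ small. To obtain the sharp threshold $\lambda_0$ and the ``if and only if'' I would use the sign structure: $g_t(\lambda)>0$ exactly when $\lambda^{2(p-q)}<\frac{a\,\Delta_p(t)}{b\,\Delta_q(t)}$, so $\lambda L_0$ beats $L_t$ on the interval $(0,\lambda_0(t))$ with $\lambda_0(t)^{2(p-q)}=\frac{a\Delta_p(t)}{b\Delta_q(t)}$, and setting $\lambda_0:=\inf_{t\neq t_0}\lambda_0(t)$ gives a global threshold. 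The monotone (single-crossing) behaviour in $\lambda$ of each $g_t$ then yields that $\lambda L_0$ is minimal \emph{if and only if} $\lambda<\lambda_0$; the maximizer statement is the mirror image, using that $L_1$ strictly maximizes each zeta value so the differences reverse sign, the attractive $\lambda^{-2q}$ term dominates as $\lambda\to\infty$, and one obtains a lower threshold $\lambda_1$.

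The step I expect to be the main obstacle is establishing that the infimum $\lambda_0:=\inf_{t\neq t_0}\lambda_0(t)$ is attained and strictly positive, and more delicately that it is the \emph{exact} threshold in the ``only if'' direction. The difficulty is that the ratio $\Delta_p(t)/\Delta_q(t)$ is a quotient of two quantities that both vanish as $t\to t_0$, so one must control its limiting behaviour near the optimal parameter. Here I would use a local analysis at $t_0$: since $L_0$ is a nondegenerate minimizer of each $\zeta_{L_t}(s)$ (its Hessian is positive definite by the convexity argument in the proof of Theorem \ref{thm:univlocal}), both $\Delta_p(t)$ and $\Delta_q(t)$ behave quadratically in $t-t_0$, so the ratio extends continuously to $t_0$ with a finite positive limit governed by the two Hessians. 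Combined with continuity and strict positivity of $\Delta_p,\Delta_q$ away from $t_0$ on the compact set $T$, this makes $t\mapsto\lambda_0(t)$ continuous and bounded below by a positive constant, so the infimum is a genuine minimum and the threshold is sharp and explicitly computable (cf. Remark \ref{rmk:LJlambda0}).
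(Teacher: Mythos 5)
Your proposal is correct and follows essentially the same route as the paper's proof: the same scaling decomposition $E_f[\lambda L_t]=a\lambda^{-2p}\zeta_{L_t}(2p)-b\lambda^{-2q}\zeta_{L_t}(2q)$, the same threshold formula $\lambda_0$ as an infimum of ratios of zeta-function differences over the compact parameter set $T$, and the same resolution of the $0/0$ issue at $t_0$ via a Taylor expansion (the paper cites Theorem \ref{thm:univlocal} for positivity and concludes the limit $c_{p,q}>0$, exactly your Hessian argument). The maximizer case is likewise handled as the mirror image in both proofs.
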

\begin{proof}
We write $\overline{\mathcal{L}_d(\mathcal{M}_{L_0},1)}=\{ L_t\}_{t\in T}$ where $T$, by universal minimality, is a compact set of parameters (see Remark \ref{rmk:propunivloc}) and $L_{t_0}=L_0$. Let us prove the first part of the proposition. We have
$$
E_f[\lambda L_t]- E_f[\lambda L_{t_0}]\geq 0,\forall t\in T \iff \lambda \leq \lambda_0:=\left(\frac{a}{b}  \right)^{\frac{1}{2(p-q)}}\inf_{t\in T\atop t\neq t_0}\left( \frac{\zeta_{L_t}(2p)- \zeta_{L_{t_0}}(2p)}{\zeta_{L_t}(2q)- \zeta_{L_{t_0}}(2q)} \right)^{\frac{1}{2(p-q)}}.
$$
Let us show that $\lambda_0>0$. By Taylor expansion of $\zeta_{L_t}(2x)$, $x\in \{p,q\}$, when $L_t$ is close to $L_{t_0}$ combined with Theorem \ref{thm:univlocal} that ensures the positivity of both numerator and denominator, we obtain that 
$$
\lim_{t\to t_0 \atop t\neq t_0}\frac{\zeta_{L_t}(2p)- \zeta_{L_{t_0}}(2p)}{\zeta_{L_t}(2q)- \zeta_{L_{t_0}}(2q)}=c_{p,q}>0.
$$
Therefore, since $T$ is compact we obtain by continuity that $\inf_{t\in T\atop t\neq t_0}\left( \frac{\zeta_{L_t}(2p)- \zeta_{L_{t_0}}(2p)}{\zeta_{L_t}(2q)- \zeta_{L_{t_0}}(2q)} \right)^{\frac{1}{2(p-q)}}$ is actually achieved for a certain parameter $\tilde{t}\in T$ for which the quotient cannot be zero since the numerator and the denominator, that are nonnegative, only cancel for $t=t_0$.

\medskip

The proof of the second part of the proposition is shown the same way since, for $L_{t_1}:=L_1$,
$$
E_f[\lambda L_t]- E_f[\lambda L_{t_1}]\geq 0,\forall t\in T \iff \lambda \geq \lambda_1:=\left(\frac{a}{b}  \right)^{\frac{1}{2(p-q)}}\sup_{t\in T \atop t\neq t_1} \left(\frac{\zeta_{L_{t_1}}(2p)- \zeta_{L_t}(2p)}{\zeta_{L_{t_1}}(2q)- \zeta_{L_t}(2q)} \right)^{\frac{1}{2(p-q)}}>+\infty.
$$
Indeed, both numerator and denominator of the above expression are positive and vanishes only if $t=t_1$. It follows again by Taylor expansion near $t_1$ and continuity that $\lambda_1$ is finite, the supremum being replaced by a maximum which is reached.

\end{proof}
\begin{remark}[Example of computation of $\lambda_0$ an $\lambda_1$]\label{rmk:LJlambda0}
We give an example of application of Proposition \ref{prop:LJ} in dimension $d\in \{2,3\}$ where we know the universal minimizers and maximizers by Corollary \ref{cor:2d3d}. 

\medskip

In dimension $d=2$, we know that $\lambda \Z^2$ is universally minimal in $\mathcal{L}_2(\{\pm e_i \}_i,\lambda)$ and $\lambda \mathsf{A}_2$ is universally maximal in $\mathcal{L}_2(\lambda)$ (see Corollary \ref{cor:2d3d}). Choosing $T=[\pi/3,\pi/2]$ (see Remark \ref{rmk:propunivloc}), let us write $\Z^2=L_{\frac{\pi}{2}}$. Therefore, the proof of Proposition \ref{prop:LJ} shows that the following quantities are finite and strictly positive:
\begin{align*}
&\lambda_0(\Z^2,p,q,a,b):=\inf_{t\in [\pi/3,\pi/2)} \left( \frac{a}{b}\right)^{\frac{1}{2(p-q)}}\left( \frac{\zeta_{L_t}(2p)- \zeta_{L_{{\frac{\pi}{2}}}}(2p)}{\zeta_{L_t}(2q)- \zeta_{L_{\frac{\pi}{2}}}(2q)} \right)^{\frac{1}{2(p-q)}},\\
&\lambda_1(\mathsf{A}_2,p,q,a,b):=\sup_{t\in (\pi/3,\pi/2]} \left(\frac{a}{b}  \right)^{\frac{1}{2(p-q)}} \left(\frac{\zeta_{L_{\frac{\pi}{3}}}(2p)- \zeta_{L_t}(2p)}{\zeta_{L_{\frac{\pi}{3}}}(2q)- \zeta_{L_t}(2q)} \right)^{\frac{1}{2(p-q)}}.
\end{align*}

For instance, if $(p,q,a,b)=(6,3,1,2)$, then $f(r^2)=r^{-12}-2 r^{-6}$ is the classical Lennard-Jones potential and we numerically compute that $\lambda_0(\Z^2,6,3,1,2)\approx 0.7628683$, for the optimality in $\mathcal{L}_2(\mathcal{M}_{\Z^2},\lambda)$ and $\lambda_1(\mathsf{A}_2,6,3,1,2)\approx 0.989$  in $\mathcal{L}_2(\lambda)$  (see also Figure \ref{fig:LJ2d} for another numerical confirmation).
\medskip

In dimension $d=3$, we use the same kind of formulas. We know that $\lambda \Z^3$ is universally minimal in $\mathcal{L}_3(\{\pm e_i \}_i,\lambda)$ and $\lambda \mathsf{D}_3$ is universally maximal in $\mathcal{L}_3(\lambda)$ (see Corollary \ref{cor:2d3d}). We find $\lambda_0(\Z^3,6,3,1,2)\approx 0.735$ for the optimality of the SC lattice $\lambda \Z^3$ in $\mathcal{L}_3(\mathcal{M}_{\Z^3},\lambda)$, $\lambda_1(\mathsf{D}_3,6,3,1,2)\approx 0.95$ for the minimality of the FCC lattice $\lambda \mathsf{D}_3$ in $\overline{\mathcal{L}_3(\mathcal{M}_{\mathsf{D}_3^*},\lambda)}$  and $\lambda_0(\mathsf{D}_3^*,6,3,1,2)\approx 0.94$ for the optimality of the BCC lattice $\lambda \mathsf{D}_3^*$ in $\mathcal{L}_3(\mathcal{M}_{\mathsf{D}_3^*},\lambda)$. 
\end{remark}

\subsection{Numerical investigation in dimensions 2 and 3}\label{subsec:num}

We finish this paper by investigating numerically the minimum of
$$
E_f[L]:=\zeta_L(12)-2\zeta_L(6),\quad \textnormal{where}\quad f(r)=\frac{1}{r^6}-\frac{2}{r^3}.
$$
among lattices of $\mathcal{L}_d(\lambda)$, $d\in \{2,3\}$. In this case, $r\mapsto f(r^2)=r^{-12}-2 r^{-6}$ is the classical (12,6) Lennard-Jones potential with minimum at $r_0=1$. All our numerical values are found using the software Scilab by only considering few (main) terms of the infinite sums defining $E_f$ since the inverse power laws we are considering are converging rapidly to zero. The threshold values of $\lambda$ -- at which the minimizers change -- are easily computed using a gradient descent method.

\medskip

In dimension $d=2$, we consider the family of 2d lattices $\{\lambda L_t : t\in T:=[\pi/3,\pi/2]\}$ where 
$$
L_t=\Z(1,0)\oplus \Z( \cos t, \sin t),
$$
i.e. $\lambda L_t$ is a general rhombic lattice (including the square and the triangular one) of side length $\lambda$ and angle $t$. This family corresponds to our set $\mathcal{L}_2(\lambda)$ (see Remark \ref{rmk:propunivloc}). We therefore have
$$
E_f[\lambda L_t]=\sum_{m,n} f\left(\lambda^2[ m^2 + n^2 + 2 mn \cos t ] \right).
$$

Our observations are the following concerning the two-dimensional case (see also Figure \ref{fig:LJ2d}):
\begin{itemize}
\item \textit{Optimality of the square lattice:} there exists $\lambda_0\approx 0.76286$ such that if $\lambda \in (0,\lambda_0)$, then $t=\pi/2$ is the unique minimizer of $t\mapsto E_f[L_t]$ in $T$, i.e. the square lattice is the unique minimizer of the energy. This is simply a confirmation of our findings in Proposition \ref{prop:LJ} and Remark \ref{rmk:LJlambda0}.
\item \textit{Optimality of rhombic lattices:} there exists $\lambda_1\approx 0.99$ such that if $\lambda \in (\lambda_0,\lambda_1)$, the minimizer of $t\mapsto E_f[L_t]$ is in $T\backslash \{\pi/3,\pi/2\}$ and increases when $\lambda$ increases, i.e. the minimizer of the energy is rhombic with an angle changing from (a bit larger than) $\pi/3$ to (a bit smaller than) $\pi/2$.
\item \textit{Optimality of the triangular lattice:} for all $\lambda>\lambda_1$, the minimizer of $t\mapsto E_f[L_t]$ in $T$ is $t=\pi/3$, i.e. the triangular lattice minimizes the energy. This is again simply a confirmation of our findings in Proposition \ref{prop:LJ} and Remark \ref{rmk:LJlambda0}.
\item \textit{Global minimizer among all $t$ and $\lambda>0$.} The global minimizer among lattices with bonds of side length $\lambda$ is $\lambda_{opt}\mathsf{A}_2$ where $\lambda_{opt}\approx 1.112$. This also confirms our numerical results in the fixed density case found in \cite{BetTheta15,Beterloc}.
\end{itemize}

\medskip

\begin{figure}[!h]
\centering
 \includegraphics[width=5.4cm]{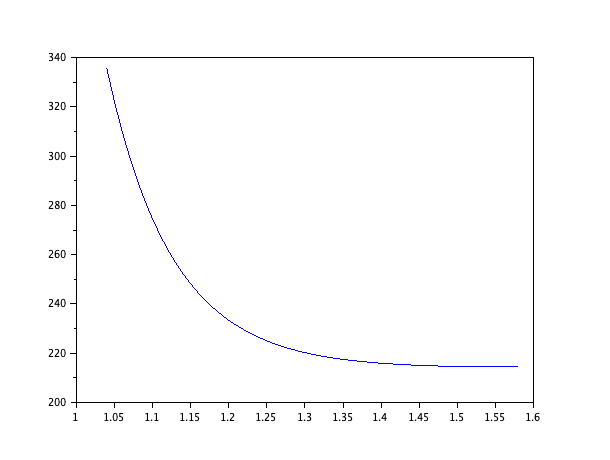} \quad  \includegraphics[width=5.4cm]{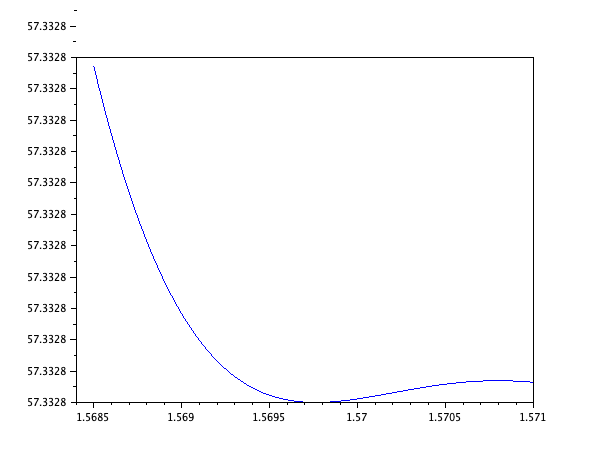}\quad 
 \includegraphics[width=5.4cm]{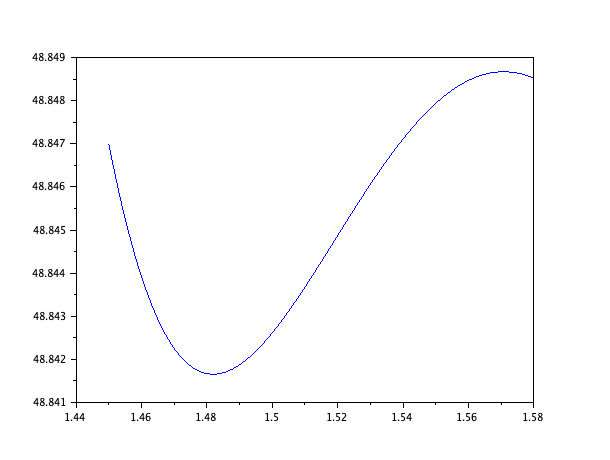}\quad
 \includegraphics[width=5.4cm]{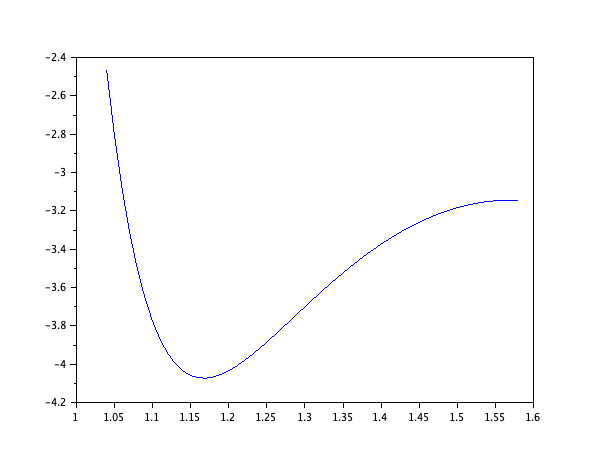}\quad \includegraphics[width=5.4cm]{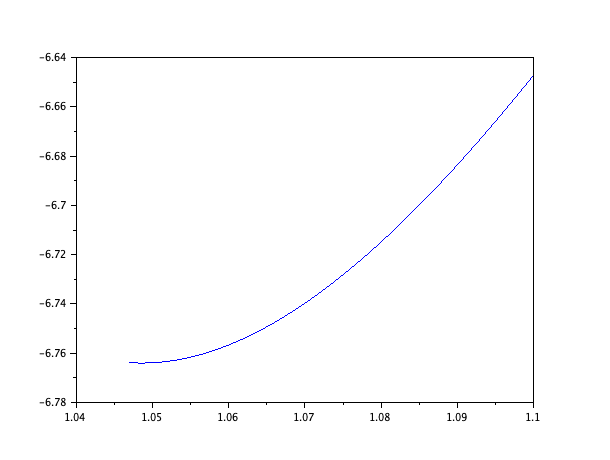} \quad \includegraphics[width=5.4cm]{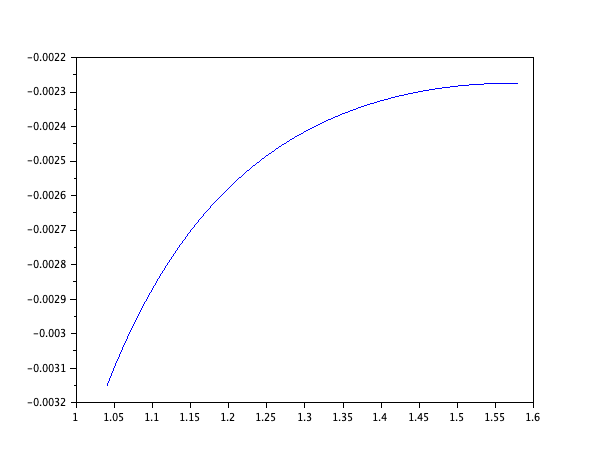} 
\caption{Plot of $t\mapsto E_f[\lambda L_t]$ on (sometimes a subset) of $[\pi/3,\pi/2]$ for different values of $\lambda$. For $\lambda\in (0,\lambda_0)$, e.g. $\lambda=0.7$ (up left), the minimizer is a square lattice (i.e. $t=\pi/2$). For $\lambda=0.7628684$ (up middle) the minimizer becomes a rhombic lattice (i.e. $t< \pi/2$) and stays rhombic for $\lambda\in (\lambda_0,\lambda_1)$, e.g. $\lambda=0.77$ (up right) and $\lambda=0.9$ (down left). Then for $\lambda>\lambda_1\approx 0.989$ (down middle) the minimizer becomes triangular (i.e. $t=\pi/3$) and stays like that for larger $\lambda$, e.g. for $\lambda=4$ (down right).}
\label{fig:LJ2d}
\end{figure}

\begin{remark}[Phase transitions for lattice ground states]
This phase transition depicted in Figure \ref{fig:LJ2d} is comparable to the one obtained in the two-dimensional Lennard-Jones case at fixed density where a transition of type ``triangular-rhombic-square-rectangular" is observed (see \cite{Beterloc,SamajTravenecLJ}). In our case, the transition as $\lambda$ increases is not very surprising since the square (resp. triangular) lattice is optimal for small (resp. large) $\lambda$ and all our building block functions are smooth and convex. In the fixed density case, the set of lattices is much larger and this is not clear why such minimizer must be in the boundary of the fundamental domain for 2d lattices (i.e. must be either rhombic or orthorhombic).

The same kind of behavior has been observed in \cite{LBMorse} to hold for Morse interaction potentials. Furthermore, Luo and Wei proved such kind of phase transition for sums of Coulombian interactions \cite{LuoChenWei} and sums of theta functions \cite{LuoWeiBEC} in the context of 3-block copolymers and two-components Bose-Einstein condensates (see also \cite{Mueller:2002aa} for a numerical investigation).
\end{remark}

\medskip

In dimension $d=3$, we parametrize our lattice $\lambda L_{t,\theta,\varphi}\in \mathcal{L}_3(\lambda)$ by three angles $(t,\theta,\varphi)$ in such a way that $L_{t,\theta,\varphi}=\Z u \oplus \Z v \oplus \Z w$ where
$$
u=(1,0,0),\quad v=(\cos t, \sin t, 0),\quad w=(\sin \theta \cos \varphi, \sin \theta \sin \varphi, \cos \theta).
$$
Therefore, we us the following expression to find our minimizers:
$$
E_f[\lambda L_{t,\theta,\varphi}]:=\sum_{m,n,k} f\left(\lambda^2\left[m^2 + n^2 + k^2 + 2 mn \cos t + 2 mk \sin\theta \cos \varphi + 2nk \sin \theta \cos(t-\varphi)  \right] \right).
$$
As an illustration, in Figure \ref{fig:comparcubic} we have plotted at the same time the energy of all cubic lattices $\{\lambda \Z^3, \lambda \mathsf{D}_3, \lambda \mathsf{D}_3^*\}$ with respect to $\lambda$. Furthermore, our numerical findings are the following:
\begin{itemize}
\item \textit{Optimality of the SC lattice:} as shown in Proposition \ref{prop:LJ} and Remark \ref{rmk:LJlambda0}, the minimizer of $E_f$ in $\mathcal{L}_3(\lambda)$ is $\lambda\Z^3$ if and only if $\lambda<\lambda_0(6,3,2,1)\approx 0.735$. 
\item \textit{Optimality of the BCC lattice:} there exists $\lambda_1\approx 0.9$ and $\lambda_2\approx 0.94$ such that $\lambda \mathsf{D}_3^*$ is the minimizer of $E_f$ in $\mathcal{L}_3(\lambda)$ if $\lambda \in (\lambda_1,\lambda_2)$.
\item \textit{Optimality of the FCC lattice.} there exists $\lambda_3\approx 0.95$ such that $\lambda \mathsf{D}_3$ is the minimizer of $E_f$ in $\mathcal{L}_3(\lambda)$ if $\lambda >\lambda_3$. In particular, the global minimizer of $E_f$ among all lattices with bonds of side length $\lambda$ is the FCC lattice $\lambda_{opt}\mathsf{D}_3$ where $\lambda_{opt}\approx 0.97$. This again confirms what we have found in Proposition \ref{prop:LJ} and Remark \ref{rmk:LJlambda0}.
\item \textit{Optimality of rhombic lattices.} for $\lambda \in (\lambda_0, \lambda_1)\cup (\lambda_2,\lambda_3)$, the minimizer of $E_f$ in $\mathcal{L}_3(\lambda)$ is a rhombic lattice.
\end{itemize}

\begin{figure}[!h]
\centering
 \includegraphics[width=9cm]{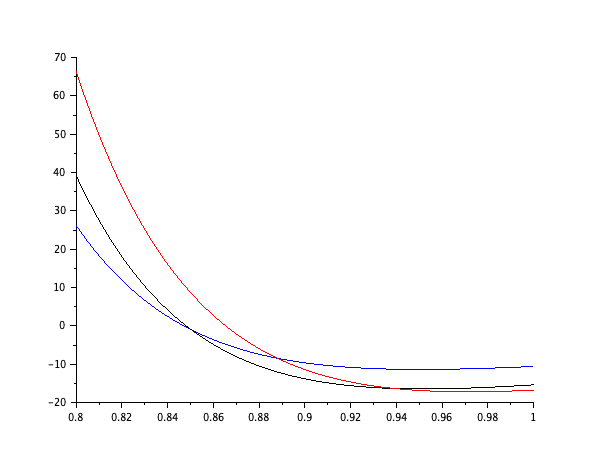}
\caption{Plot of $\lambda \mapsto E_f[\lambda L]$ on $[0.8,1]$ for $L=\Z^3$ (blue), $L=\mathsf{D}_3^*$ (black) and $L=\mathsf{D}_3$ (red).}
\label{fig:comparcubic}
\end{figure}

\noindent \textbf{Acknowledgement:} I would like to thank the Austrian Science Fund (FWF) for its financial support through the project F65 as well as  the DFG-FWF international joint project FR 4083/3-1/I 4354. I would like also to thank Ulisse Stefanelli for his suggestions concerning the first version of the draft. I am also grateful to the anonymous referees for their helpful comments.

\medskip

{\small \bibliographystyle{plain}
\bibliography{Biblio}}
\end{document}